\documentclass[conference]{IEEEtran}
\IEEEoverridecommandlockouts
% The preceding line is only needed to identify funding in the first footnote. If that is unneeded, please comment it out.
\usepackage{cite}
\usepackage{amsmath,amssymb,amsfonts}
\usepackage{graphicx}
\usepackage{textcomp}
\usepackage{tikz}
\usepackage{xcolor}

\usetikzlibrary{arrows.meta,calc}

%%%%Lazy copy

\usepackage[colorlinks]{hyperref}
\hypersetup{
     colorlinks=true,
     linkcolor=blue,
     filecolor=blue,
     citecolor = blue,
     urlcolor=cyan,
     }

\usepackage{amsthm}

\usepackage{booktabs}
\usepackage{caption}
\usepackage{cite}
\usepackage{colortbl}
\usepackage{pdfpages}
\usepackage{dsfont}

\usepackage{bm}
\usepackage{bbm}
\usepackage{dirtytalk}

\usepackage{subcaption}

\newtheorem{theorem}{Theorem}

\newtheorem{lemma}{Lemma}

\newtheorem{proposition}{Proposition}

\theoremstyle{definition}

\newtheorem{definition}{Definition}

%Kill Proofs
% \usepackage{environ}
% \NewEnviron{killcontents}{}
% \let\proof\killcontents
% \let\endproof\endkillcontents

\theoremstyle{definition}

\theoremstyle{definition}

%Colours

\definecolor{DarkGreen}{rgb}{0.1,0.5,0.1}
\definecolor{DarkRed}{rgb}{0.5,0.1,0.1}
\definecolor{DarkBlue}{rgb}{0.1,0.1,0.5}
\definecolor{DarkPurple}{rgb}{0.5,0.2,0.5}
\definecolor{DarkTurquoise}{rgb}{0.1,0.5,0.5}

%Comments

% \newcolumntype{C}[1]{>{\centering\let\newline\\\arraybackslash\hspace{0pt}}m{#1}}

\DeclareMathOperator{\tr}{tr}

%%%%%%%%%%%%%%%%%%%%%%%%%%%%%

\usepackage{booktabs}
\usepackage{diagbox}

\def\BibTeX{{\rm B\kern-.05em{\sc i\kern-.025em b}\kern-.08em
    T\kern-.1667em\lower.7ex\hbox{E}\kern-.125emX}}
\begin{document}

\title{Field Trace Polynomial Codes for Secure Distributed Matrix Multiplication\\
% {\footnotesize \textsuperscript{*}Note: Sub-titles are not captured in Xplore and
% should not be used}
% \thanks{Identify applicable funding agency here. If none, delete this.}
}

\author{Roberto Assis Machado\IEEEauthorrefmark{1}, Rafael G. L. D'Oliveira\IEEEauthorrefmark{2}, Salim El Rouayheb\IEEEauthorrefmark{3} and Daniel Heinlein\IEEEauthorrefmark{4}   \\ \IEEEauthorrefmark{1}SMSS, Clemson University, USA \\ \IEEEauthorrefmark{2}RLE, Massachusetts Institute of Technology, USA\\ \IEEEauthorrefmark{3}ECE, Rutgers University, USA \\\IEEEauthorrefmark{4}Department of Communications and Networking, Aalto University, Finland \\ Emails: robertoassismachado@gmail.com, rafaeld@mit.edu, \\ salim.elrouayheb@rutgers.edu, daniel.heinlein@aalto.fi }

\maketitle

\begin{abstract}
We consider the problem of communication efficient secure distributed matrix multiplication. The previous literature has focused on reducing the number of servers as a proxy for minimizing communication costs. The intuition being, that the more servers used, the higher the communication cost. We show that this is not the case. Our central technique relies on adapting results from the literature on repairing Reed-Solomon codes where instead of downloading the whole of the computing task, a user downloads field traces of these computations. We present field trace polynomial codes, a family of codes, that explore this technique and characterize regimes for which our codes outperform the existing codes in the literature. 
\end{abstract}
\begin{IEEEkeywords}
security, distributed computation, coding theory
\end{IEEEkeywords}
\vspace{-10pt}
%\textit{An extended version of this paper is accessible in the following website:} \url{http://www.mit.edu/~rafaeld/ftpcodes.pdf}

\section{Introduction}
%!TEX root = main.tex

We consider the problem of secure distributed matrix multiplication (SDMM) in which a user has two matrices, $A \in \mathbb{F}_q^{a \times b}$ and $B \in \mathbb{F}_q^{b \times c}$, and wishes to compute their product, $AB \in \mathbb{F}_q^{a \times c}$, with the assistance of $N$ servers, without leaking any information about either $A$ or $B$ to any server.  We assume that all servers are honest but curious (passive), in that they are not malicious and will faithfully follow the pre-agreed upon protocol. However, any $T$ of them may collude to try to eavesdrop and deduce information about either $A$ or $B$. 

We follow the  setting proposed in~\cite{ravi2018mmult}, with many follow-up works~\cite{Kakar2019OnTC,koreans,d2019gasp,DOliveira2019DegreeTF, Aliasgari2019DistributedAP,aliasgari2020private,Kakar2019UplinkDownlinkTI,doliveira2020notes,Yu2020EntangledPC,mital2020secure,bitar2021adaptive, hasircioglu2021speeding}. The performance metric initially used was the download cost, i.e., the total amount of data downloaded by the users from the server. Subsequent work has also considered the upload cost \cite{mital2020secure}, the total communication cost \cite{9004505}, and computational costs \cite{doliveira2020notes}.

Different partitionings of the matrices lead to different trade-offs between upload and download costs. In this paper, we consider the inner product partitioning given by $A = \begin{bmatrix}A_1 & \cdots & A_L\end{bmatrix}$ and $B^{\intercal} = \begin{bmatrix}B_1^{\intercal} & \cdots & B_L^{\intercal}\end{bmatrix}$ such that $AB = A_1 B_1 + \cdots + A_L B_L$, where all products $A_\ell B_\ell$ are well-defined and of the same size. Under this partitioning, a polynomial code is a polynomial $h(x)=f(x) \cdot g(x)$, whose coefficients encode the submatrices $A_kB_\ell$. The $N$ servers compute the evaluations $h(\alpha_1),\ldots,h(\alpha_N)$ for certain $\alpha_1,\ldots,\alpha_N$. The next step is where the scheme we propose differs from previous works. In previous works, the servers send these evaluations to the user. The polynomial $h(x)$ is constructed so that no $T$-subset of evaluations reveals any information about $A$ or $B$ ($T$-security), but so that the user can reconstruct $AB$ given all $N$ evaluations (decodability). In order to contrast them with our approach, we refer to these schemes as traditional polynomial schemes.

Examples of traditional polynomial schemes for the inner product partitioning are the secure MatDot codes in \cite{Aliasgari2019DistributedAP} and the codes in~\cite{mital2020secure}. The main focus in the literature was on minimizing the minimum amount of helping servers $N$, also known as the recovery threshold, in order to reduce the communication cost. The intuition being, that the more servers used, the higher the communication cost. We show that this is not generally the case, i.e., in some cases it is possible to reduce the total communication by contacting more servers.

In this paper, we present field trace polynomial (FTP) codes, a non-traditional polynomial scheme inspired by techniques from the repair literature, specifically the trace-based methods for repairing Reed-Solomon codes, first introduced in \cite{guruswami2017repairing}. In the Reed-Solomon codes repair setting, servers store different evaluations of a polynomial $h(\alpha_i)$, some of which may be lost due to node failures. The repair problem consists of finding schemes to recover any lost evaluation while minimizing the download bandwidth (referred to as the repair bandwidth in that setting) \cite{dimakis2011survey}. The key tool utilized  is the field trace. If $\mathbb{E}$ is a finite algebraic extension of a field $\mathbb{F}$, then the field trace $\tr_{\mathbb{E}/\mathbb{F}} : \mathbb{E} \rightarrow \mathbb{F}$ is a linear functional over the extension field $\mathbb{E}$ when seen as a vector space over $\mathbb{F}$. Instead of sending full evaluations of $h(\alpha_i) \in \mathbb{E}$, the servers can repair an evaluation by sending traces $\tr_{\mathbb{E}/\mathbb{F}} (h(\alpha_i)) \in \mathbb{F}$. Since $\mathbb{F}$ is a sub-field of $\mathbb{E}$, this results in a smaller download cost. Indeed, the download cost decreases as more servers are used.  

FTP codes follow the same idea. By increasing the number of servers and transmitting traces of the evaluations, instead of the whole evaluations, FTP codes can obtain better download costs. However, since the SDMM setting is a computation-offload setting and not just an information storage one, we must also account for the upload cost. Thus, care must be taken so that the decrease in the download cost, obtained from the use of the field trace, is not outgrown by the increase in the upload cost, from having to contact more servers. In Theorem~\ref{theo:scheme}, we characterize the total communication rate achieved by FTP codes.

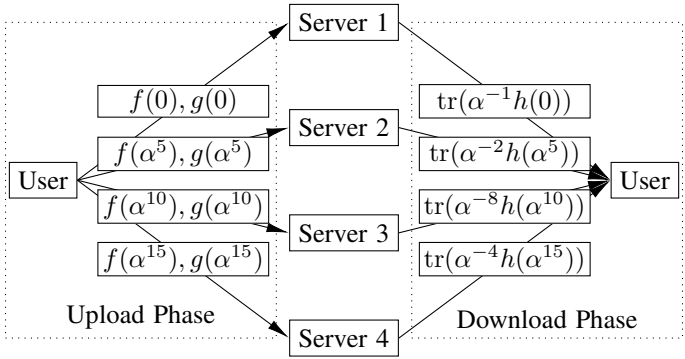
\begin{figure}[!t]
\centering
\begin{tikzpicture}
\tikzset{
myline/.style={-{Latex[length=3mm, width=1.3mm]}},
mynode/.style={pos=0.5,fill=white,draw,text width=22mm,align=center,inner sep=1pt}
}

\node[draw] (U1) at (0,0) {User};
\node[draw] (S1) at (4,2.1) {Server 1};
\node[draw] (S2) at (4,0.7) {Server 2};
\node[draw] (S3) at (4,-0.7) {Server 3};
\node[draw] (S4) at (4,-2.1) {Server 4};
\node[draw] (U2) at (8,0) {User};
\coordinate (BL1) at ($(U1 |- S4)+(-0.5,0)$);
\coordinate (TR1) at ($(S1)+(-0.9,0)$);
\coordinate (BL2) at ($(S4)+(0.9,0)$);
\coordinate (TR2) at ($(U2 |- S1)+(0.5,0)$);

\draw[myline] (U1.east) -- (S1.west) node[mynode] {$f(0), g(0)$};
\draw[myline] (U1.east) -- (S4.west) node[mynode] {$f(\alpha^{15}), g(\alpha^{15})$};
\draw[myline] (U1.east) -- (S2.west) node[mynode] {$f(\alpha^{5}), g(\alpha^{5})$};
\draw[myline] (U1.east) -- (S3.west) node[mynode] {$f(\alpha^{10}), g(\alpha^{10})$};

\draw[myline] (S1.east) -- (U2.west) node[mynode] {$\tr(\alpha^{-1}h(0))$};
\draw[myline] (S2.east) -- (U2.west) node[mynode] {$\tr(\alpha^{-2}h(\alpha^5))$};
\draw[myline] (S4.east) -- (U2.west) node[mynode] {$\tr(\alpha^{-4}h(\alpha^{15}))$};
\draw[myline] (S3.east) -- (U2.west) node[mynode] {$\tr(\alpha^{-8}h(\alpha^{10}))$};

\draw[dotted] (BL1) rectangle (TR1);
\draw[dotted] (BL2) rectangle (TR2);
\node[above] at ($(BL1)!0.5!(BL1 -| TR1)$) {Upload Phase};
\node[above] at ($(BL2)!0.5!(BL2 -| TR2)$) {Download Phase};
\end{tikzpicture}
\caption{An example of an FTP code detailed in Section \ref{sec: motivating example}. The user computes carefully chosen evaluations of the polynomials $f(x)$ and $g(x)$ and uploads them to the servers. Each server then computes the product of their received evaluations, which is itself an evaluation of the polynomial $h(x) = f(x) \cdot g(x)$. In a traditional polynomial scheme, the servers would then transmit these evaluations to the user. Utilizing an FTP code, the servers compute the trace of theses evaluations and sends them instead. Thus, decreasing the download cost.}\label{fig:phase1and2}
\end{figure}

\begin{theorem}\label{theo:scheme}
Let $L$ and $T$ be positive integers, $p_1, \ldots, p_L$ be prime numbers, listed in increasing order, $q_0$ be a prime power, and set $q=q_0^{p_1p_2\ldots p_L}$. Let $A \in \mathbb{F}_q^{a \times b}$, $B \in \mathbb{F}_q^{b \times c}$ be two matrices, and $N_i = p_i + 2L + 2T - 2$, for every $i \in [L]$.
Then, there exists an FTP code with partitioning parameter $L$ and security parameter $T$, which securely computes $AB$ utilizing $N_L$ servers with a total communication rate of
\begin{align} \label{eq: FTP rate}
\mathcal{R} = \left(\frac{N_Lb}{L} \left( \frac{1}{a}+\frac{1}{c} \right)+\sum_{i=1}^L\frac{N_i}{p_i}\right)^{-1} .
\end{align}
\end{theorem}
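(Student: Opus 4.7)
My plan is to realize the FTP code by combining the secure MatDot polynomial encoding for the upload with a Guruswami--Wootters style trace recovery for the download. First I would set
\[
f(x) = \sum_{\ell=1}^{L} A_\ell\, x^{\ell-1} + \sum_{t=1}^{T} R_t\, x^{L-1+t}, \qquad
g(x) = \sum_{\ell=1}^{L} B_\ell\, x^{L-\ell} + \sum_{t=1}^{T} S_t\, x^{L-1+t},
\]
with $R_t,S_t$ drawn uniformly and independently, so that $h(x)=f(x)g(x)$ has degree $2L+2T-2$ and the coefficient of $x^{L-1}$ equals $\sum_{\ell=1}^{L} A_\ell B_\ell = AB$. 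Each of the $N_L$ servers receives one pair $(f(\alpha_j),g(\alpha_j))$ and computes $h(\alpha_j)=f(\alpha_j)g(\alpha_j)$; this already accounts for $N_L(a+c)b/L$ transmitted symbols of $\mathbb{F}_q$, i.e., the first summand of~\eqref{eq: FTP rate} after normalizing by the $ac$ symbols of $AB$.

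For $T$-security, any $T$ colluders see $T$ evaluations of each of $f$ and $g$, which are $\mathbb{F}_q$-linear in $(A_\ell,R_t)$ and $(B_\ell,S_t)$. The submatrix relating the random pads $(R_t)$ to $(f(\alpha_{j_k}))_{k=1}^{T}$ is $\mathrm{diag}(\alpha_{j_k}^{L})$ times a standard Vandermonde in the $\alpha_{j_k}$, hence invertible whenever the $\alpha_{j_k}$ are distinct and nonzero; this forces the colluders' view to be uniform over its range and independent of $A$, and the identical argument handles $B$.

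The main work is the download. Since $q=q_0^{p_1\cdots p_L}$, for each $i\in[L]$ there is a unique intermediate subfield $\mathbb{K}_i\subset\mathbb{F}_q$ of index $p_i$. I would run $L$ trace-based sub-recoveries in parallel on the same upload: for each $i$, pick a subset of $N_i=p_i+2L+2T-2$ evaluation points (from the $N_L$ global ones) and weights $\beta^{(i)}_j\in\mathbb{F}_q$ so that the traces $\tr_{\mathbb{F}_q/\mathbb{K}_i}(\beta^{(i)}_j\, h(\alpha_j))$ annihilate the $2L+2T-2$ unwanted coefficients of $h$ and span the $\mathbb{K}_i$-dual of $\mathbb{F}_q$ on the target coefficient $h_{L-1}=AB$; this is the precise analogue of the $p_i$-th trace repair of a Reed--Solomon symbol in~\cite{guruswami2017repairing}. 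Assembling the $L$ sub-recoveries across the pairwise coprime subfields $\mathbb{K}_1,\ldots,\mathbb{K}_L$ then reconstructs each entry of $AB$.

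The principal obstacle will be producing a single family of points $\{\alpha_j\}$ and weights $\{\beta^{(i)}_j\}$ that simultaneously (i) is reusable across all $L$ sub-recoveries so that one physical upload of $N_L$ servers suffices rather than $\sum_i N_i$, (ii) annihilates the $2L+2T-2$ extraneous coefficients of $h$ against each of the $L$ distinct trace maps, and (iii) preserves $T$-security throughout. Once this construction is in place, the rate is routine accounting: sub-recovery $i$ downloads $N_i\cdot ac$ symbols of $\mathbb{K}_i$, equivalent to $N_i\cdot ac/p_i$ symbols of $\mathbb{F}_q$; summing over $i\in[L]$ and combining with the upload term above yields exactly~\eqref{eq: FTP rate}.
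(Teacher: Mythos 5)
Your plan takes a genuinely different route than the paper, and this difference is precisely where the gap opens. You encode $AB$ as a single \emph{coefficient} $h_{L-1}$ of $h(x)$ via the MatDot-style monomial construction, whereas the paper interpolates so that $f(\alpha_i)=A_i$, $g(\alpha_i)=B_i$ at $L$ field-theoretic primitive elements $\alpha_1,\ldots,\alpha_L$ (chosen so that $[\mathbb{F}_{q_0}(\alpha_i):\mathbb{F}_{q_0}]=p_i$), so that $h(\alpha_i)=A_iB_i$. This makes the decoding target not one coefficient but $L$ \emph{distinct evaluations} $h(\alpha_1),\ldots,h(\alpha_L)$, which is what makes the $L$ sub-recoveries non-redundant: sub-recovery $i$ uses the trace $\tr_{\mathbb{F}_q/F_i}$ with $F_i=\mathbb{F}_{q_0}(\alpha_j:j\neq i)$ to recover $h(\alpha_i)$ alone, and all $L$ are needed to sum to $AB=\sum_i A_iB_i$.

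In your version, all $L$ sub-recoveries aim at the \emph{same} object $h_{L-1}$. As you describe them — ``span the $\mathbb{K}_i$-dual of $\mathbb{F}_q$ on the target coefficient'' — each sub-recovery $i$, using $N_i$ servers, would already determine $h_{L-1}$ completely. If that is so, running the smallest one ($i=1$, costing $N_1/p_1$ normalized downloads and requiring only $N_1<N_L$ servers for upload) suffices, and the rate you would actually achieve is not \eqref{eq: FTP rate} but something strictly better and with a different upload term. Conversely, if each sub-recovery is only meant to give a fragment of $h_{L-1}$ — a single trace $\tr_{\mathbb{F}_q/\mathbb{K}_i}(h_{L-1})$, say — then dimension counting shows the $L$ fragments together supply only $\sum_i\prod_{j\neq i}p_j$ coordinates out of the $\prod_j p_j$ needed, which is insufficient for $L\geq 2$, and the ``pairwise coprime / CRT'' intuition does not close this gap because the trace kernels are not complementary. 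Either reading leaves the claimed accounting $\sum_i N_i/p_i$ without a matching construction. This is not a loose end you can defer: the explicit choice of annihilator polynomials $k_i(x)$, the evaluation points $\alpha_{L+1},\ldots,\alpha_n\in\mathbb{F}_{q_0}$, and the use of $\{v_ik_i(\alpha_i)\alpha_i^s\}_{s=0}^{p_i-1}$ as an $F_i$-basis are the technical core of Lemma~\ref{lem:decodability}, and you explicitly flag all of this as the ``principal obstacle'' without producing it. The security argument you sketch is fine and the upload bookkeeping matches, but the download phase — the substance of Theorem~\ref{theo:scheme} — is not established.
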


In Theorem~\ref{theo:comparison}, we show that FTP codes outperform any traditional polynomial scheme when the number of columns in $A$ (or rows in $B$) is sufficiently smaller than both the number of rows in $A$ and columns in $B$.

\begin{theorem}\label{theo:comparison}
For every traditional polynomial scheme, there exists a constant $K>0$ and an FTP code such that the communication rate of the FTP code is higher than that of the traditional polynomial scheme, whenever the matrix dimensions $a$, $b$, and $c$ are such that $b \left( \frac{1}{a}+\frac{1}{c} \right) < K$.
\end{theorem}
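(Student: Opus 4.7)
The plan is to derive the total communication rate of a generic traditional polynomial scheme for the inner product partitioning, compare it term by term against the FTP rate in~\eqref{eq: FTP rate}, and choose the FTP primes $p_1,\dots,p_L$ so that the download term dips below the traditional recovery threshold, absorbing the mild upload penalty when $b(1/a+1/c)$ is small.

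For the first step, I would use that under the inner product partitioning each server receives one evaluation of $f$ of size $ab/L$ and one of $g$ of size $bc/L$, while the user downloads $R$ full evaluations of $h$ of size $ac$. Dividing the useful output $ac$ by the total traffic yields
\[
\mathcal{R}_{\mathrm{trad}} = \left(\frac{Nb}{L}\!\left(\frac{1}{a}+\frac{1}{c}\right)+R\right)^{-1},
\]
where $N$ is the number of servers and $R$ the recovery threshold. I would then invoke the standard degree lower bound: since $h(x)=f(x)g(x)$ and $T$-security requires each of $f$ and $g$ to include at least $T$ random pad coefficients, $\deg h \geq 2L+2T-2$, so that $R \geq 2L+2T-1$. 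The strict inequality $R>L$ is the decisive input here, since it is what leaves FTP room to beat the traditional download once enough primes are used.

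For the second step, match the FTP partitioning and security parameters to those of the traditional scheme. Using $N_i = p_i+2L+2T-2$, rewrite $\sum_{i=1}^L N_i/p_i = L + (2L+2T-2)\sum_{i=1}^L 1/p_i$, which decreases to $L < R$ as the primes grow. Choose the primes large enough that $\varepsilon := R - \sum_{i=1}^L N_i/p_i > 0$. Then $\mathcal{R}_{\mathrm{FTP}} > \mathcal{R}_{\mathrm{trad}}$ is equivalent to
\[
\frac{(N_L - N)\,b}{L}\!\left(\frac{1}{a}+\frac{1}{c}\right) < \varepsilon,
\]
which holds whenever $b(1/a+1/c) < K := L\varepsilon / \max\{N_L-N,\,1\}$, yielding the stated constant.

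The part I expect to require the most care is verifying the bound $R \geq 2L+2T-1$ \emph{uniformly} over every traditional polynomial scheme, rather than only for the known constructions such as secure MatDot. Once the paper's formal definition of this class is pinned down, this reduces to the standard Lagrange-interpolation argument applied to the product polynomial $h$, but it has to hold in full generality since Theorem~\ref{theo:comparison} quantifies over all such schemes.
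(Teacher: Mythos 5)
Your overall approach is the same as the paper's: write the traditional rate as $\bigl(\tfrac{Nb}{L}(1/a+1/c)+R\bigr)^{-1}$, compare it term-by-term to the FTP rate~\eqref{eq: FTP rate}, observe that $\sum_i N_i/p_i = L + (2L+2T-2)\sum_i 1/p_i$ can be pushed arbitrarily close to $L$ by choosing the primes large, and then isolate $b(1/a+1/c)$ to obtain the constant $K$. The paper packages the prime-selection step into Lemma~\ref{lem:helper} and sets $K$ as in~\eqref{eq: K}, but the underlying inequality manipulation is identical to yours; your $K = L\varepsilon/\max\{N_L - N,1\}$ is in fact a slightly sharper version of~\eqref{eq: K} since you keep the exact $\sum_i N_i/p_i$ rather than bounding it by $1/\eta$.

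There is, however, one genuine error. You claim that for every traditional polynomial scheme $\deg h \geq 2L+2T-2$ and hence $R \geq 2L+2T-1$, arguing that $f$ and $g$ must each carry $T$ random coefficients on top of $L$ data blocks. This is \emph{false} for the class under consideration: the scheme of~\cite{mital2020secure}, cited in the paper as the state of the art for inner-product partitioning, has recovery threshold $N = L+2T$, which is strictly less than $2L+2T-1$ whenever $L \geq 2$. The degree of $h = fg$ is not the bottleneck because that construction arranges for high-degree terms of $h$ to be known or cancel, so fewer evaluations suffice. You correctly flag in your last paragraph that this bound would need to hold uniformly over the class and that it is the delicate point; indeed, it does not hold. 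Fortunately, as you also observe, the decisive input is only the much weaker $R > L$, and this is exactly what the paper invokes as its ``trivial lower bound'' on the recovery threshold. If you replace your derivation of $R \geq 2L+2T-1$ by the direct use of $R > L$ (so that $\varepsilon = R - \sum_i N_i/p_i$ can be made positive by taking the $p_i$ large), the rest of your argument goes through unchanged and recovers Theorem~\ref{theo:comparison}.
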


The expression for the constant $K$ is given in \eqref{eq: K}. Theorem~\ref{theo:comparison} is proved by constructing a particular FTP code and comparing it to the trivial lower bound on the recovery threshold of a traditional polynomial scheme, namely that $N>L$. To provide some context, the current state of the art for inner product partitioning is given by the traditional polynomial scheme in \cite{mital2020secure} which has a recovery threshold of $N=L+2T$ (or $N=L+T$ with precomputations).

\subsection{Related Work}

For distributed computations, polynomial codes were originally introduced in~\cite{polycodes1} in order to mitigate stragglers in distributed matrix multiplication. This was followed by a series of works,~\cite{polycodes2,pulkit,pulkit2,fundamental}.

The literature on SDMM has also studied different variations on the model we focus on here. For instance, in~\cite{nodehi2018limited,jia2019capacity,mital2020secure,akbari2021secure} the encoder and decoder are considered to be separate, in~\cite{nodehi2018limited} servers are allowed to cooperate, and in~\cite{kim2019private} the authors consider a hybrid between SDMM and private information retrieval where the user has a matrix $A$ and wants to privately multiply it with a matrix $B$ belonging to some public list. FTP codes can be readily used or adapted to many of these settings as has been done with other polynomial schemes (e.g.,~\cite{bitar2021adaptive, zhu2021improved}).
 
There is now a vast literature on the repair problem for distributed storage systems (e.g., \cite{dimakis2010network,dimakis2011survey,rashmi2011optimal,cadambe2013asymptotic}). The field trace method relevant to us was developed in \cite{guruswami2017repairing} and later extended in \cite{tamo2017optimal}. Methods from repair were used in \cite{bitar2020minimizing, rawat2018centralized, huang2017secret} to construct communication-efficient secret sharing. 

\subsection{Main Contributions}

Our main contributions are summarized below.

\begin{itemize}

    \item We show a connection between SDMM and the repair problem of Reed-Solomon codes for distributed storage. This is done by treating the user as a repair node wishing to restore a polynomial evaluation. The essential difference between these settings is that in SDMM the upload cost is important. This occurs because the SDMM setting is a computation-offload setting and not just an information storage one, in which the storage upload can be amortized. Other differences include more flexibility in choosing the amount of servers, in designing the code, and in choosing the evaluation points.
    
    \item By adapting the techniques used in the repair of Reed-Solomon codes, we present FTP codes for SDMM. Contrary to traditional polynomial codes, FTP codes achieve higher download rates by communicating with more servers. We show that they are secure, decodable, and present their total communication rate in Theorem~\ref{theo:scheme}.

    \item In Theorem~\ref{theo:comparison}, we show that FTP codes outperform any traditional polynomial scheme when the number of columns in $A$ (or rows in $B$) is sufficiently smaller than both the number of rows in $A$ and columns in $B$.

\end{itemize}

\section{Preliminaries}

In this section, we introduce some notation and concepts needed for the rest of the paper. For example, we define $[M,N] = \{M,M+1,\ldots,N\}$ and $[M] = [1,M]$.

\begin{definition}
Let $\mathcal{C}$ be a linear code of length $n$ over a finite field $\mathbb{F}_q$. The dual code of $\mathcal{C}$ is the linear subspace
\begin{align*}
% \mathcal{C}^\perp = \{(d_1, d_2, \ldots, d_n)\in \mathbb{F}_q^n :\\ \sum_{i=1}^n d_i c_i = 0  \quad\forall (c_1,c_2,\ldots,c_n) \in \mathcal{C}  \}
\mathcal{C}^\perp = \left\{d\in \mathbb{F}_q^n : \sum_{i=1}^n d_i c_i = 0  \quad\forall c \in \mathcal{C}  \right\}
.
\end{align*}

\end{definition}
\begin{definition}\label{def:GRS}
Let $V = (v_1, v_2, \ldots, v_n)\in \mathbb{F}_q^n$ be a vector with non-zero entries and $\Omega = \{\alpha_1, \alpha_2 \ldots, \alpha_n\} \subset \mathbb{F}_q$ a set of distinct elements. The \emph{Generalized Reed-Solomon Code} with parameters $n,k,\Omega,$ and $V$ is given by
$GRS_{\mathbb{F}_q}(n, k, \Omega,V) = \{(v_1f(\alpha_1), v_2f(\alpha_2), \ldots, v_nf(\alpha_n)): f(x)\in \mathbb{F}_q[x] \text{ and } \deg(f) < k\}$.
% \begin{multline*}
% GRS_{\mathbb{F}_q}(n, k, \Omega,V) = \{(v_1f(\alpha_1), v_2f(\alpha_2), \ldots, v_nf(\alpha_n)): \\ f(x)\in \mathbb{F}_q[x] \text{ and } \deg(f) < k\}
% .
% \end{multline*}
% \begin{align*}
% &GRS_{\mathbb{F}_q}(n, k, \Omega,V)
% \\&= \{(v_1f(\alpha_1), \ldots, v_nf(\alpha_n)): f(x)\in \mathbb{F}_q[x] \text{ and} \deg(f) < k\}
% .
% \end{align*}
\end{definition}

If $V= (1,1,\ldots, 1)$, then $GRS_{\mathbb{F}_q}(n, k, \Omega,V)$ is the Reed-Solomon Code $RS_{\mathbb{F}_q}(n, k, \Omega)$. In addition, the dual code of an $RS_{\mathbb{F}_q}(n, k, \Omega)$ is $GRS_{\mathbb{F}_q}(n, n-k, \Omega,V')$, where $V'=(v_1', \ldots, v_n')$ and $(v_i')^{-1} = \prod_{\substack{1\leq j\leq n \\ j\neq i}}(\alpha_i - \alpha_j)$.

Let $\mathbb{E}$ be a finite algebraic extension of a field $\mathbb{F}$. The degree $s=[\mathbb{E}: \mathbb{F}]$ of a field extension $\mathbb{E}/\mathbb{F}$ is the dimension of $\mathbb{E}$ as a vector space over $\mathbb{F}$. Thus, any element $v \in \mathbb{E}$ can be expressed as a vector $(v_0, v_1, \ldots, v_{s-1})\in\mathbb{F}^s$.

\begin{definition}
Let $\mathbb{E}$ be a finite algebraic extension of a field $\mathbb{F}$. The field trace $\tr_{\mathbb{E}/\mathbb{F}} : \mathbb{E} \rightarrow \mathbb{F}$ is the $\mathbb{F}$-linear functional
$\tr_{\mathbb{E}/\mathbb{F}}(x) = x + x^{|\mathbb{F}|} + x^{|\mathbb{F}|^2} +\cdots + x^{|\mathbb{F}|^{s-1}} .$
\end{definition}

Note that the codomain of the field trace is the subfield $\mathbb{F}$. It is this fact that allows savings in the download cost. We also note that, given an $\mathbb{F}$-basis $\{\lambda_0, \lambda_1, \ldots, \lambda_{s-1}\}$ of $\mathbb{E}$, there exists a \emph{trace-dual} $\mathbb{F}$-basis $\{\mu_0, \mu_1, \ldots, \mu_{s-1}\}$ of $\mathbb{E}$, i.e., such that $\tr_{\mathbb{E}/\mathbb{F}}(\lambda_i \mu_j)$ equals $1$ if $i=j$, and equals $0$ otherwise.

The next proposition plays a crucial role in proving the decodability of FTP codes in Lemma~\ref{lem:decodability}.

\begin{proposition}[Page 759 in \cite{huffman2021concise}]
Let $\{\lambda_0, \lambda_1, \ldots, \lambda_{s-1}\}$ and $\{\mu_0, \mu_1, \ldots, \mu_{s-1}\}$ be trace-dual $\mathbb{F}$-bases of $\mathbb{E}$. Then, given an element $\beta \in \mathbb{E}$, the coefficients of its expansion in the basis $\{\mu_0, \mu_1, \ldots, \mu_{s-1}\}$ are given by  $\tr_{\mathbb{E}/\mathbb{F}}(\lambda_i \beta)$, so that
\begin{equation}\label{dualexpansion}
    \beta = \sum_{i=0}^{s-1} \tr_{\mathbb{E}/\mathbb{F}}(\lambda_i \beta)\mu_i
    .
\end{equation}
\end{proposition}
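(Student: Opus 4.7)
The plan is to invert the expansion of $\beta$ in the basis $\{\mu_0, \ldots, \mu_{s-1}\}$ using the field trace, exactly as one recovers Fourier coefficients against an orthonormal basis using an inner product. Since $\{\mu_0, \ldots, \mu_{s-1}\}$ is an $\mathbb{F}$-basis of $\mathbb{E}$, there exist unique coefficients $c_j \in \mathbb{F}$ with $\beta = \sum_{j=0}^{s-1} c_j \mu_j$, so the entire proposition reduces to verifying that $c_i = \tr_{\mathbb{E}/\mathbb{F}}(\lambda_i \beta)$ for each index $i$.

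To carry this out, I would multiply the expansion by $\lambda_i$, apply $\tr_{\mathbb{E}/\mathbb{F}}$ to both sides, and use $\mathbb{F}$-linearity of the trace to pull each $c_j$ out. The trace-dual property $\tr_{\mathbb{E}/\mathbb{F}}(\lambda_i \mu_j) = \delta_{ij}$, built into the hypothesis, then collapses the resulting sum to the single term $c_i$. Substituting these identified coefficients back into the original expansion of $\beta$ produces the asserted identity.

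The only subtle point, and the only place where care is required, is noting that the coefficients $c_j$ genuinely lie in $\mathbb{F}$ rather than merely in $\mathbb{E}$; this is precisely what licenses pulling them through the $\mathbb{F}$-linear trace map. Once this is observed, the argument is mechanical and presents no real obstacle—it is essentially a translation of the orthonormal-basis inversion formula into the field-extension setting, with the nondegenerate bilinear pairing $(x,y) \mapsto \tr_{\mathbb{E}/\mathbb{F}}(xy)$ playing the role of the inner product and the trace-dual basis playing the role of the dual frame.
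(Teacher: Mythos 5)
Your argument is correct and is the standard textbook proof of this fact. The paper itself does not prove the proposition—it cites it directly from the reference \cite{huffman2021concise}—so there is no in-paper proof to compare against; your reconstruction (expand $\beta=\sum_j c_j\mu_j$ with $c_j\in\mathbb{F}$, hit with $\lambda_i$, apply the $\mathbb{F}$-linear trace, and collapse via $\tr_{\mathbb{E}/\mathbb{F}}(\lambda_i\mu_j)=\delta_{ij}$) is exactly how the cited source and essentially every treatment establishes it, and your emphasis on the coefficients lying in $\mathbb{F}$ is the right point to flag.
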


\section{A Motivating Example: $L=T=1$} \label{sec: motivating example}
%!TEX root = main.tex
\label{sec3}

We begin our description of FTP codes with the following example, which we present in as much detail as possible in order to showcase the essential ingredients of the scheme. We compare the traditional polynomial scheme using $N'=3$ servers with an FTP code using $N=4$ server.

In this example, a user wishes to multiply two matrices $A \in \mathbb{F}_{16}^{a \times b}$ and $B \in \mathbb{F}_{16}^{b \times c}$ with the assistance of non-colluding helper servers. The solution to this via traditional polynomial schemes utilizes $N'=3$ servers and involves picking two random matrices $R \in \mathbb{F}_{16}^{a \times b}$ and $S \in \mathbb{F}_{16}^{b \times c}$ and constructing the polynomials $f'(x) = A + Rx$ and $g'(x) = B + Sx$. The user then selects three distinct non-zero elements $\beta_1,\beta_2,\beta_3 \in \mathbb{F}_{16}$ and uploads both $f'(\beta_i)$ and $g'(\beta_i)$ to Server $i$. Each server then computes the product $f'(\beta_i) \cdot g'(\beta_i)$. This is equivalent to computing an evaluation $h'(\beta_i)$ of the polynomial $h'(x) = AB + (AS + RB)x + RSx^2$. The user then downloads each $h'(\beta_i)$, obtaining three evaluations of a polynomial of degree two. Therefore, the user can retrieve the polynomial $h(x)$ and compute $h(0)=AB$ as desired.

Security of the traditional scheme follows from the fact that $I(f'(\beta_i), g'(\beta_i);A,B)=0$. As for the communication costs, first the user uploads $f'(\beta_i)$ and $g'(\beta_i)$, which cost $ab$ and $bc$, symbols respectively, three times. Thus, the upload cost is $3(ab+bc)$ symbols of $\mathbb{F}_{16}$. Then, the user downloads $h(\beta_i)$, which costs $ac$ symbols of $\mathbb{F}_{16}$, three times, obtaining a download cost of $3ac$ symbols of $\mathbb{F}_{16}$. Since the user retrieves $AB \in \mathbb{F}_{16}^{a \times c}$, which consists of $ac$ symbols of $\mathbb{F}_{16}$, the total communication rate is given by $\mathcal{R}' = \frac{ac}{3ab+3bc+3ac}$.

Based on techniques from the literature on repairing Reed-Solomon codes for distributed storage, we present an FTP code which obtains a lower download cost by utilizing $N=4$ servers. Let $\alpha \in \mathbb{F}_{16}$ be an algebraic element of degree $2$ such that $\alpha^4 + \alpha +1 = 0$. Then, the finite field $\mathbb{F}_2(\alpha)$, constructed by extending the binary field with the algebraic element $\alpha$, is identified by $\mathbb{F}_{16}$. Let $f(x) = A+R(x-\alpha)$ and $g(x) = B +S(x-\alpha)$, where $R$ and $S$ are the same random matrices used in the traditional polynomial scheme above. Then, $h(x) = f(x) \cdot g(x)$ is such that $h(\alpha) = AB$.\footnote{Unlike in the repairing Reed-Solomon codes setting, in the SDMM setting, we have more freedom to design the code and choose the evaluation points.}

Our scheme, illustrated in Figure \ref{fig:phase1and2}, works as follows. First, the user uploads the evaluations $f(y_i)$ and $g(y_i)$ to each Server $i$. Then, each Server $i$ computes $\tr(\alpha^{-j_i}h(y_i))$, where $(j_1,j_2,j_3, j_4) = (1,2,8,4)$ and $(y_1, y_2, y_3, y_4)= (0, \alpha^{5}, \alpha^{10}, \alpha^{15})$, and sends it back to the user.\footnote{Here, we denote the field trace by $\tr := \tr_{\mathbb{F}_{16}/\mathbb{F}_{4}}:\mathbb{F}_{16} \rightarrow \mathbb{F}_4$ given by $\tr(x) = x + x^{4}$. We apply the trace function element-wise on matrices which is equivalent to an element-wise exponentiation.} To show that the user is able to decode $AB$, we denote the answer of each Server~$i$ by $S_i$. Then, \begin{align*}
&
\,\,\,\,\,\,\,\,
\alpha^4(S_1+S_2+S_3+S_4) + \alpha^5 S_2 + \alpha^{10} S_3 + \alpha^{15} S_4
% \\&=
% \alpha^4 (\tr(\alpha^{-1} h(0)) + \tr(\alpha^{-2} h(\alpha^5)) + \tr(\alpha^{-8} h(\alpha^{10})) + \tr(\alpha^{-4} h(\alpha^{15})))
% \\&+\alpha^5\tr(\alpha^{-2} h(\alpha^5)) + \alpha^{10}\tr(\alpha^{-8} h(\alpha^{10})) + \alpha^{15}\tr(\alpha^{-4} h(\alpha^{15}))
\\&=
\alpha^4 \tr(\alpha^{-1} h(0) +\alpha^{-2} h(\alpha^5) + \alpha^{-8} h(\alpha^{10})+ \alpha^{-4} h(\alpha^{15}))
\\&
\,\,\,\,\,\,\,
+\tr(\alpha^{3} h(\alpha^5) + \alpha^{2} h(\alpha^{10}) + \alpha^{11} h(\alpha^{15}))
\\&=
\alpha^4 \tr(h(\alpha))  + \tr(\alpha h(\alpha)) 
\\&=
h(\alpha) = AB
\end{align*}
The first equality follows from the $\mathbb{F}_4$-linearity of the field trace together with  the fact that $\alpha^{5},\alpha^{10},\alpha^{15} \in \mathbb{F}_4$. The second equality follows from utilizing the equation $h(x) = h_0 + h_1 x + h_2 x^2$ and the facts that $\alpha^{15} = 1$ and $\alpha^4+\alpha+1 = 0$. The third equality follows from utilizing the equation $\tr(x)=x+x^4$ and the fact that $\alpha^4+\alpha+1 = 0$.

Security follows by showing that $I(f(y_i), g(y_i);A,B)=0$, as is done in Lemma \ref{lem:tsecure}. As for the communication costs, first the user uploads $f(y_i)$ and $g(y_i)$, which cost $2ab$ and $2bc$, symbols respectively, four times. Thus, the upload cost is $4(2ab+2bc)$ symbols of $\mathbb{F}_{4}$. Then, the user downloads $\tr(\alpha^{-j_i}h(y))$, which costs $ac$ symbols of $\mathbb{F}_{16}$, four times, obtaining a download cost of $4ac$ symbols of $\mathbb{F}_{4}$. Since the user retrieves $AB \in \mathbb{F}_{16}^{a \times c}$, which consists of $2ac$ symbols of $\mathbb{F}_{4}$, the total communication rate is given by $\mathcal{R} = \frac{ac}{4ab+4bc+2ac}$. 

We note that the download cost of the FTP code is lower than that of the traditional polynomial code, and that the opposite is true for the upload cost. In terms of total communication, the FTP code outperforms the traditional one whenever $\mathcal{R}>\mathcal{R}'$. This occurs whenever the matrix dimensions satisfy the inequality $b\left( \frac{1}{a}+\frac{1}{c} \right)<1$.

We note that this FTP code could also outperform any traditional polynomial scheme with $N'=2$ servers, if such a scheme existed. Indeed, the only way for a traditional polynomial scheme to outperform the FTP code for all matrix dimensions would be for it to use $N'=1$ server, which is not possible because of the $1$-security constraint.

\section{Field Trace Polynomial Codes}

In this section, we present the general construction for FTP codes. The main idea is to perform the same technique as in Section~\ref{sec3}, $L$ times, each one retrieving $A_i B_i$, but doing so utilizing a single polynomial $h(x) = f(x) \cdot g(x)$.

\noindent \textbf{Choosing the Field:} We begin by choosing the field over which we operate. Let $L \in \mathbb{Z}^+$, $\{p_1, \ldots, p_L\}$ be a set of distinct prime numbers in increasing order, $q_0$ a prime power, and set $q=q_0^{p_1p_2\ldots p_L}$. We then operate over $\mathbb{F}_q$. For $i \in [L]$, we let $\alpha_i \in \mathbb{F}_{q_0}$ be such that $\mathbb{F}_{q_0}(\alpha_i)$ is a field extension of $\mathbb{F}_{q_0}$ of order $p_i$. 
And thus, $\mathbb{F}_q = \mathbb{F}_{q_0}(\alpha_1, \alpha_2, \ldots, \alpha_{L})$. We also define $F_i = \mathbb{F}_{q_0}(\alpha_j : 1\leq j \leq L \text{ and } j\neq i)$.

\noindent \textbf{Choosing the Polynomials:} As described in the introduction, we consider the setting where the user partitions the matrices $A \in \mathbb{F}_q^{a \times b}$ and $B \in \mathbb{F}_q^{b \times c}$ as $A = \begin{bmatrix}A_1 & \cdots & A_L\end{bmatrix}$ and as $B^{\intercal} = \begin{bmatrix}B_1^{\intercal} & \cdots & B_L^{\intercal}\end{bmatrix}$ such that $AB = A_1 B_1 + \cdots + A_L B_L$, where each $A_i\in\mathbb{F}_q^{a \times \frac{b}{L}}$ and $B_i\in\mathbb{F}_q^{\frac{b}{L}\times c}$. In order to obtain $T$-security $R_1, \ldots, R_T \in \mathbb{F}_q^{a \times \frac{b}{L}}$ and $S_1, \ldots, S_T \in \mathbb{F}_q^{\frac{b}{L}\times c}$ are chosen independently and uniformly at random. We then define $f,g \in \mathbb{F}_q[x]$ as the polynomials of degree $L+T-1$ such that, $f(\alpha_i)=A_i$, $g(\alpha_i)=B_i$, for every $i \in [L]$, and $f(\alpha_{L+i})=R_j$, $g(\alpha_{L+i})=S_j$, for every $j \in [T]$.

\noindent \textbf{Choosing the Evaluation Points:} For each $i\in [L]$, denote $N_i = p_i + 2L + 2T - 2$ and $n = N_L+L = p_L + 3L + 2T - 2$. Consider the set $\{\alpha_1, \alpha_2, \ldots, \alpha_{L}\}$ of primitive elements defined above. Let $\alpha_{L+1}, \ldots, \alpha_{n} \in \mathbb{F}_{q_0}$ be distinct elements which are also distinct from $\{\alpha_1, \alpha_2, \ldots, \alpha_{L}\}$. We then define $\Omega=\{\alpha_1, \ldots, \alpha_{n}\}$. The evaluation points the user sends are those in the set $\{\alpha_{L+1}, \ldots, \alpha_n\}$ of size $N_L$.\footnote{Thus, $q_0\geq N_L$ is required.}

\noindent \textbf{Upload Phase:} The FTP code uses $N_L$ servers. The user uploads $f(\alpha_{L+i})$ and $g(\alpha_{L+i})$ to each Server $i$.

\noindent \textbf{Download Phase:} Let $v_j = \prod_{\substack{1\leq i\leq n \\ i\neq j}}   (\alpha_j-\alpha_i)^{-1}$ and $k_i(x)$ be the annihilator polynomial for $\{\alpha_j: j\in [n]\setminus ([L+1:N_i+L] \cup\{i\})\}$. Then, for each $i\in [L]$, Server $j$ computes  $\tr_{\mathbb{F}_q/F_i} (v_{L+j}k_i(\alpha_{L+j}) h(\alpha_{L+j}))$ and sends these $F_i$-values to the user.

\noindent \textbf{User Decoding:} In Lemma~\ref{lem:decodability}, we show that the user is able to retrieve $h(\alpha_i) = A_iB_i$ from 
$\{\tr_{\mathbb{F}_q/F_i} (v_jk_i(\alpha_{j}) h(\alpha_j)): j\in [L+1:N_i+L]\}$. Combining these, the user can decode $AB = A_1B_1+ \ldots +A_LB_L$.

\section{Proof of Theorem~\ref{theo:scheme}}

We break the proof into different Lemmas. We show that FTP codes are decodable, in Lemma~\ref{lem:decodability}, $T$-secure, in Lemma~\ref{lem:tsecure}, and characterize their performance, in Lemma~\ref{lem:comcosts}. These statements combined prove Theorem~\ref{theo:scheme}. 

\begin{lemma}\label{lem:decodability}
Given positive integers $L$ and $T$, let $p_1, p_2, \ldots, p_L$ be distinct prime numbers in the ascending order and $q_0$ be a prime power with $q_0\geq N_L$.
Let $A =\begin{bmatrix}A_1 & \cdots & A_L\end{bmatrix} \in \mathbb{F}_q^{a \times b}$ and $B^{\intercal} = \begin{bmatrix}B_1^{\intercal} & \cdots & B_L^{\intercal}\end{bmatrix} \in \mathbb{F}_q^{c \times b}$, where $q=q_0^{p_1p_2\ldots p_L}$. Then, $h(\alpha_i)$ can be decoded using $N_i$ servers, for $i\in [L]$,.
\end{lemma}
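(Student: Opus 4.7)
The plan is to implement a Guruswami--Wootters style trace-based repair separately for each $i \in [L]$, recovering $h(\alpha_i) = A_i B_i$ and then summing to get $AB$. The key ingredients are the duality of the Reed--Solomon code together with the deliberate placement of the evaluation points $\alpha_{L+1},\ldots,\alpha_n$ inside the small subfield $\mathbb{F}_{q_0} \subseteq F_i$, which allows scalars to be pulled in and out of the trace $\tr_{\mathbb{F}_q/F_i}$ via its $F_i$-linearity.

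Since $\deg h \leq 2L+2T-2$, the vector $(h(\alpha_j))_{j=1}^n$ is a codeword of $RS_{\mathbb{F}_q}(n,k,\Omega)$ with $k = 2L+2T-1$, whose dual consists of vectors $(v_j p(\alpha_j))_j$ for $\deg p < n-k = p_L+L-1$. For each $m \in \{0,\ldots,p_i-1\}$, the polynomial $x^m k_i(x)$ has degree at most $(p_i-1)+(n-N_i-1) = p_L+L-2 < n-k$, so it lies in the dual. Combined with the fact that $k_i$ vanishes at every $\alpha_j$ outside $[L+1:N_i+L]\cup\{i\}$, this yields
\begin{equation*}
v_i \alpha_i^m k_i(\alpha_i)\, h(\alpha_i) = -\sum_{j=1}^{N_i} v_{L+j} \alpha_{L+j}^m k_i(\alpha_{L+j})\, h(\alpha_{L+j}).
\end{equation*}

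Next I would apply $\tr_{\mathbb{F}_q/F_i}$ to both sides. Since $\alpha_{L+j} \in \mathbb{F}_{q_0} \subseteq F_i$, all the scalars $v_{L+j}\alpha_{L+j}^m k_i(\alpha_{L+j})$ on the right lie in $F_i$, so $F_i$-linearity lets the user rewrite the right-hand side as
\begin{equation*}
-\sum_{j=1}^{N_i} \alpha_{L+j}^m \cdot \tr_{\mathbb{F}_q/F_i}\bigl(v_{L+j} k_i(\alpha_{L+j}) h(\alpha_{L+j})\bigr),
\end{equation*}
which is a known linear combination of the server transmissions. Hence the user extracts $\tr_{\mathbb{F}_q/F_i}(v_i \alpha_i^m k_i(\alpha_i) h(\alpha_i))$ for every $m \in \{0,\ldots,p_i-1\}$.

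To finish, I would invoke the trace-dual basis proposition. The elements $\{v_i \alpha_i^m k_i(\alpha_i)\}_{m=0}^{p_i-1}$ form an $F_i$-basis of $\mathbb{F}_q$: because the primes $p_j$ are distinct, a tower/coprimality argument gives $[\mathbb{F}_q:F_i] = p_i$ and $\mathbb{F}_q = F_i(\alpha_i)$, so $\{1,\alpha_i,\ldots,\alpha_i^{p_i-1}\}$ is a basis, and multiplying by the nonzero element $v_i k_i(\alpha_i) \in \mathbb{F}_q$ preserves the basis property. Letting $\{\mu_m\}$ be its trace-dual basis, equation~\eqref{dualexpansion} gives $h(\alpha_i) = \sum_{m=0}^{p_i-1} \tr_{\mathbb{F}_q/F_i}(v_i \alpha_i^m k_i(\alpha_i) h(\alpha_i))\,\mu_m$. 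The main obstacle I anticipate is this last step: verifying cleanly that the rescaled family $\{v_i \alpha_i^m k_i(\alpha_i)\}$ is indeed an $F_i$-basis of $\mathbb{F}_q$, which ultimately rests on the coprimality of the distinct primes $p_j$ to conclude $[\mathbb{F}_q:F_i]=p_i$.
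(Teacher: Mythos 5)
Your proposal is correct and follows essentially the same route as the paper's proof: use the dual of $RS_{\mathbb{F}_q}(n,2L+2T-1,\Omega)$ with the annihilator $k_i(x)$ times powers $\alpha_i^m$, apply $\tr_{\mathbb{F}_q/F_i}$ and pull the $F_i$-valued scalars out, then reconstruct $h(\alpha_i)$ via the trace-dual basis proposition. The only (welcome) difference is that you spell out the coprimality argument showing $[\mathbb{F}_q:F_i]=p_i$ and $\mathbb{F}_q=F_i(\alpha_i)$, a point the paper compresses into the phrase ``$\alpha_i$ is a primitive element.''
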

\begin{proof}

Let $f(x),g(x)\in \mathbb{F}_q[x]$ be polynomials such that
\begin{align*}
f(\alpha_i)&=A_i, &g(\alpha_i)&=B_i &&&\text{ for } i \in [L] \text{ and} \\
f(\alpha_{L+i})&=R_i, &g(\alpha_{L+i})&=S_i &&&\text{ for } i \in [T],
\end{align*}
using the inner product partitioning $A = \begin{bmatrix}A_1 & \cdots & A_L\end{bmatrix}$ and $B^{\intercal} = \begin{bmatrix}B_1^{\intercal} & \cdots & B_L^{\intercal}\end{bmatrix}$ and uniformly distributed random $\mathbb{F}_q$-matrices $R_i,S_i$.
Therefore, $h(x)=f(x)g(x)$ is a polynomial of degree $2L+2T-2$ such that $h(\alpha_i)=A_iB_i$, for $i \in [L]$.

Let $\mathcal{C}=RS_{\mathbb{F}_q}(n, 2L+2T-1, \Omega)$.
Since the degree of $h$ is smaller than $2L+2T-1$, the vector $\left(h(\alpha_1), h(\alpha_2), \ldots, h(\alpha_{n})\right)$ is in $\mathcal{C}$.

Define, for $i \in [L]$, $U_i =\{L+1, \ldots, L+N_i\}$, which has $N_i$ elements.
For each $i \in [L]$, let
\begin{align*}
k_i(x) = \prod_{j\in[n]\setminus (U_i\cup\{i\})}(x-\alpha_{j}).
\end{align*}

Note that
\begin{align*}
\deg(k_i(x))
&= (p_L+3L+2T-2) - (p_i+2L+2T-1)
\\&= p_L+L-1-p_i,
\end{align*}
and so $\deg(k_i(x)x^s) < p_L+L-1$ for $s=0,\ldots, p_i-1$.

Using $V$ as in Definition~\ref{def:GRS}, it follows that the Generalized Reed-Solomon with parameter $n$, $p_L+L-1$,  $\Omega$ and $V$ contains the element $(v_1 k_i(\alpha_1)\alpha_1^s, \ldots, v_n k_i(\alpha_n)\alpha_n^s)$, for $s=0,\ldots, p_i-1$, i.e.,
\begin{align*}
&(v_1 k_i(\alpha_1)\alpha_1^s, \ldots, v_n k_i(\alpha_n)\alpha_n^s)
\\&\in \mathcal{C}^\perp
= GRS_{\mathbb{F}_q}(n, p_L+L-1,  \Omega, V).
\end{align*}

The dual-code property implies that
\begin{align*}
\sum_{j=1}^{n} v_jk_i(\alpha_{j})\alpha_{j}^s h(\alpha_{j}) = 0.
\end{align*}

For each $i \in [L]$, we have $k_i(\alpha_j)=0$ if $j \not \in U_i \cup \{i\}$, hence
\begin{align*}
v_ik_i(\alpha_i)\alpha_{i}^{s}h(\alpha_{i}) =-\sum_{j\in U_i} v_jk_i(\alpha_{j})\alpha_{j}^s h(\alpha_{j})
.
\end{align*}
%%% ADD TO ARXIV VERSION
% \begin{align*}
% v_ik_i(\alpha_i)\alpha_{i}^{s}h(\alpha_{i}) &=  -\sum_{\substack{j=1\\j\neq i}}^n v_jk_i(\alpha_{j})\alpha_{j}^s h(\alpha_{j}) \\
% &=-\sum_{j\in U_i} v_jk_i(\alpha_{j})\alpha_{j}^s h(\alpha_{j})
% .
% \end{align*}
Applying $\tr_i:=\tr_{\mathbb{F}_q/F_i}$ to both sides yields, using the $F_i$-linearity and $\alpha_j \in F_i$ for $j \in U_i$,
\begin{align*}
\tr_i\left(v_ik_i(\alpha_i)\alpha_{i}^{s}h(\alpha_i) \right)= -\sum_{j\in U_i}  \alpha_{j}^s \tr_i\left(v_jk_i(\alpha_{j}) h(\alpha_{j})\right).
\end{align*}

%%% ADD TO ARXIV VERSION
% \begin{align*}
% \tr_i\left(v_ik_i(\alpha_i)\alpha_{i}^{s}h(\alpha_i) \right)&= \tr_i\left(-\sum_{j\in U_i} v_jk_i(\alpha_{j})\alpha_{j}^s h(\alpha_j)\right)\\
% % &= -\sum_{j\in U_i} \tr_i\left(v_jk_i(\alpha_{j})\alpha_{j}^s h(\alpha_{j})\right)\\
% &= -\sum_{j\in U_i}  \alpha_{j}^s \tr_i\left(v_jk_i(\alpha_{j}) h(\alpha_{j})\right).
% \end{align*}

Let $\{\lambda_{s,i} = v_ik_i(\alpha_i)\alpha_i^s : 0\leq s<p_i\}$. Since $v_ik_i(\alpha_i)\neq 0$ and $\alpha_i$ is a primitive element, the set $\{\lambda_{s,i}\}$ is an $F_i$-basis of $\mathbb{E}$.
Further, there exists a set $\{\mu_{s,i}:0\leq s<p_i \}$ which is the trace-dual $\mathbb{F}_{q_0}$-basis to $\{\lambda_{s,i}\}$ of $\mathbb{F}_{q_0}(\alpha_i)$. Thus,
\begin{align*}
\sum_{s=0}^{p_i-1}  \tr_i\left( \lambda_{s,i} h(\alpha_i)\right)\mu_{s,i}
= h(\alpha_i)
= A_i B_i.
\end{align*}%This completes the proof of the lemma.
\end{proof}
Decodability is then obtained by repeating the process given in Lemma \ref{lem:decodability} and summing it over $i \in [L]$,
\begin{align*}
% &\,\,\,\,\,\,\,\,
\sum_{i=1}^L \sum_{s=0}^{p_i-1} \tr_i\left( v_ik_i(\alpha_i)\alpha_i^s h(\alpha_{i}) \right)\mu_{s,i}
% \\&=\sum_{i=1}^L \sum_{s=0}^{p_i-1} \tr_i\left( \lambda_{s,i} h(\alpha_{i}) \right)\mu_{s,i}
% \\&=\sum_{i=1}^L h(\alpha_{i})
% = A_1B_1 + \cdots + A_LB_L
= AB
.
\end{align*}

Next, we show that FTP codes are $T$-secure.
\begin{lemma}\label{lem:tsecure}
FTP codes are $T$-secure.
\end{lemma}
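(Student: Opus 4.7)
The plan is to show that for any subset $\mathcal{T} \subseteq [N_L]$ of size at most $T$, the view $\{(f(\alpha_{L+j}), g(\alpha_{L+j})) : j \in \mathcal{T}\}$ obtained by the colluding servers is statistically independent of $(A, B)$. Since $f$ depends only on $(A, R_1, \ldots, R_T)$ and $g$ depends only on $(B, S_1, \ldots, S_T)$, and since $\{R_i\}$ and $\{S_i\}$ are sampled independently of each other and of $A, B$, it suffices to establish independence of $\{f(\alpha_{L+j})\}_{j \in \mathcal{T}}$ from $A$ and, symmetrically, of $\{g(\alpha_{L+j})\}_{j \in \mathcal{T}}$ from $B$.

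First I would fix $A_1, \ldots, A_L$ and analyze the map $\Phi \colon (R_1, \ldots, R_T) \mapsto (f(\alpha_{L+j}))_{j \in \mathcal{T}}$. Expanding $f$ in the Lagrange basis at the interpolation nodes $\alpha_1, \ldots, \alpha_{L+T}$, namely $f(x) = \sum_{i=1}^{L} A_i \ell_i(x) + \sum_{i=1}^{T} R_i \ell_{L+i}(x)$, one sees that $\Phi$ is entry-wise affine in the $R_i$ with a shift depending only on $A$. The key observation is that $f$ has degree at most $L+T-1$ and is fully determined by its values at the $L + T$ distinct nodes $\alpha_1, \ldots, \alpha_L$ together with $\{\alpha_{L+j} : j \in \mathcal{T}\}$; by invertibility of the corresponding Vandermonde system over $\mathbb{F}_{q_0} \subseteq \mathbb{F}_q$, the map $\Phi$ is an entry-wise bijection between $T$-tuples of matrices in $\mathbb{F}_q^{a \times b/L}$.

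Since $(R_1, \ldots, R_T)$ is uniform on $(\mathbb{F}_q^{a \times b/L})^T$, this forces $(f(\alpha_{L+j}))_{j \in \mathcal{T}}$ to be uniform as well, and in particular its distribution does not depend on the fixed value of $A$; hence $\{f(\alpha_{L+j})\}_{j \in \mathcal{T}}$ is independent of $A$. Running the same argument verbatim on $g$ gives independence of $\{g(\alpha_{L+j})\}_{j \in \mathcal{T}}$ from $B$, and combining these two independences with the mutual independence of the pads $\{R_i\}$ and $\{S_i\}$ yields $I(\{(f(\alpha_{L+j}), g(\alpha_{L+j})) : j \in \mathcal{T}\}; (A, B)) = 0$, which is exactly $T$-security. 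The only real bookkeeping obstacle is justifying the bijectivity of $\Phi$ via the non-degeneracy of interpolation at $L+T$ distinct points; beyond this, the argument is the standard polynomial-masking reasoning familiar from previous SDMM constructions.
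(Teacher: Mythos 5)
Your argument is correct and follows essentially the same route as the paper's: both decompose $f$ in the Lagrange basis at the $L+T$ interpolation nodes, reduce $T$-security to showing the colluding servers' $f$-views are independent of $A$ (and symmetrically for $g$ and $B$), and appeal to the invertibility of the linear map taking the masks $(R_1,\ldots,R_T)$ to the $T$ observed evaluations. The paper phrases this last step as an entropy computation showing the masked contribution $f^{(T)}$ attains maximal entropy, while you phrase it as a change-of-variables bijection giving a uniform conditional distribution independent of $A$; these are the same argument in different language.
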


% \begin{proof}
% In extended version. Follows from standard arguments.
% \end{proof}

\begin{proof}
Since $f(x)$ is independent from $B$ and $g(x)$ is independent from $A$, proving $T$-security is equivalent to showing that $I(A;f(\alpha_{i_1}), \ldots, f(\alpha_{i_T}))=I(B;g(\alpha_{i_1}), \ldots, g(\alpha_{i_T}))=0$. We prove the claim for $f(x)$; the proof for $g(x)$ is analogous.

By Lagrange interpolation, $f(x)$ can be expressed as
\begin{align*}
f(x) = \sum_{i=1}^{L+T}f_i(x)f(\alpha_i)
,
\end{align*}
where the Lagrange basis polynomial $f_i(x)$ is given by 
\[ f_i(x) = \prod_{\substack{1\leq m \leq L+T
\\ m\neq i}} \frac{x-\alpha_m}{\alpha_i-\alpha_m}. \]

Then,
\begin{align*}
&I(A;f(\alpha_{i_1}), \ldots, f(\alpha_{i_T}))\\
=&H(f(\alpha_{i_1}), \ldots, f(\alpha_{i_T})) - H(f(\alpha_{i_1}), \ldots, f(\alpha_{i_T})|A)\\
% &\leq H(f(\alpha_{i_1})) + \cdots + H(f(\alpha_{i_T})) - H(f(\alpha_{i_1}), \ldots, f(\alpha_{i_T})|A)\\
% & = H(f(\alpha_{i_1})) + \cdots + H(f(\alpha_{i_T})) - H(f^{(T)}(\alpha_{i_1}), \ldots, f^{(T)}(\alpha_{i_T})),\\
\le & \sum_{j \in \mathcal{T}}H(f(\alpha_{j})) - H(f(\alpha_{i_1}), \ldots, f(\alpha_{i_T})|A)\\
=& \sum_{j \in \mathcal{T}}H(f(\alpha_{j})) - H(f^{(T)}(\alpha_{i_1}), \ldots, f^{(T)}(\alpha_{i_T})),\\
=& \frac{Tab}{L}\log(q_0) - H(f^{(T)}(\alpha_{i_1}), \ldots, f^{(T)}(\alpha_{i_T}))\\
\end{align*}
where $f^{(T)}(x) = \sum_{i=1}^{T}f_{L+i}(x)f(\alpha_{L+i}) = \sum_{i=1}^{T}f_{L+i}(x)R_i$.

Since the evaluation points $\{\alpha_i: i\in \mathcal{T}\}$ are all different, the following matrix has full rank.
\[
\left(\begin{matrix}
f_{L+1}(\alpha_{i_1})&f_{L+1}(\alpha_{i_2})&\cdots& f_{L+1}(\alpha_{i_T})\\
f_{L+2}(\alpha_{i_1})&f_{L+2}(\alpha_{i_2})&\cdots& f_{L+2}(\alpha_{i_T})\\
\vdots & \vdots & \ddots & \vdots\\
f_{L+T}(\alpha_{i_1})&f_{L+T}(\alpha_{i_2})&\cdots& f_{L+T}(\alpha_{i_T})\\
\end{matrix}\right)
\]
This is because the set of $f_i's$ are linearly independent and the evaluation points are different which implies that $f^{(T)}(\alpha_{i_j})$'s are uniformly distributed in the space of the matrices $M_{a\times \frac{b}{L}}(\mathbb{F}_{q_0})$. Thus, 
$H(f^{(T)}(\alpha_{i_1}), \ldots, f^{(T)}(\alpha_{i_T})) = \frac{Tab}{L}\log(q_0)$,
and therefore, $I(A;f(\alpha_{i_1}), \ldots, f(\alpha_{i_T}))= 0$.
\end{proof}

We now characterize the total communication.

\begin{lemma}\label{lem:comcosts}
FTP codes have total communication rate
\begin{align*}
\mathcal{R} = \left(\frac{N_Lb}{L} \left( \frac{1}{a}+\frac{1}{c} \right)+\sum_{i=1}^L\frac{N_i}{p_i}\right)^{-1} .
\end{align*}
\end{lemma}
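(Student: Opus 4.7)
\medskip

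\noindent\textbf{Proof plan for Lemma~\ref{lem:comcosts}.}
The plan is to count the total output size, total upload, and total download in units of $\mathbb{F}_{q_0}$-symbols, and then to take the appropriate ratio. Since the $p_i$ are distinct primes, the tower $\mathbb{F}_{q_0}\subseteq F_i\subseteq \mathbb{F}_q$ gives $[\mathbb{F}_q:\mathbb{F}_{q_0}]=p_1p_2\cdots p_L$ and $[F_i:\mathbb{F}_{q_0}]=(p_1p_2\cdots p_L)/p_i$, so every $\mathbb{F}_q$-symbol corresponds to $p_1\cdots p_L$ base symbols and every $F_i$-symbol corresponds to $(p_1\cdots p_L)/p_i$ base symbols. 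In particular the desired output $AB\in\mathbb{F}_q^{a\times c}$ carries $ac\cdot p_1\cdots p_L$ symbols of $\mathbb{F}_{q_0}$.

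Next I compute the upload. The construction calls for $N_L$ servers; the user sends $f(\alpha_{L+j})\in\mathbb{F}_q^{a\times b/L}$ and $g(\alpha_{L+j})\in\mathbb{F}_q^{b/L\times c}$ to the $j$-th one, contributing $(ab/L)+(bc/L)$ $\mathbb{F}_q$-symbols per server, i.e.\ $\tfrac{b(a+c)}{L}\,p_1\cdots p_L$ base symbols per server. Summed over all $N_L$ servers this yields a total upload of
\begin{align*}
\mathrm{U}\;=\;N_L\cdot\frac{b(a+c)}{L}\cdot p_1p_2\cdots p_L
\end{align*}
symbols of $\mathbb{F}_{q_0}$.

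For the download I invoke the decoding structure from Lemma~\ref{lem:decodability}: for each $i\in[L]$ the user only needs the values $\tr_{\mathbb{F}_q/F_i}(v_{L+j}k_i(\alpha_{L+j})h(\alpha_{L+j}))$ for $j\in[1,N_i]$. Each such value is an $a\times c$ matrix over $F_i$, hence $ac\cdot(p_1\cdots p_L)/p_i$ symbols of $\mathbb{F}_{q_0}$; summing over the $N_i$ contributing servers and over the $L$ index values $i$ gives
\begin{align*}
\mathrm{D}\;=\;\sum_{i=1}^{L} N_i\cdot ac\cdot\frac{p_1p_2\cdots p_L}{p_i}
\;=\;ac\cdot p_1p_2\cdots p_L\cdot\sum_{i=1}^{L}\frac{N_i}{p_i}.
\end{align*}

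Finally, dividing the output size $ac\cdot p_1\cdots p_L$ by $\mathrm{U}+\mathrm{D}$ and cancelling the common factor $ac\cdot p_1\cdots p_L$ gives exactly the stated expression
\begin{align*}
\mathcal{R}=\left(\frac{N_Lb}{L}\Bigl(\tfrac{1}{a}+\tfrac{1}{c}\Bigr)+\sum_{i=1}^{L}\frac{N_i}{p_i}\right)^{-1}.
\end{align*}
The computation is essentially bookkeeping, so the only step that needs even mild care is confirming the extension degrees $[F_i:\mathbb{F}_{q_0}]=(p_1\cdots p_L)/p_i$, which I would justify by pointing out that the $p_i$ are distinct primes so the subextensions $\mathbb{F}_{q_0}(\alpha_i)$ have pairwise coprime degrees and their compositum has degree equal to the product of the individual degrees; no other nontrivial obstacle arises.
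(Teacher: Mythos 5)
Your proof is correct and matches the paper's proof essentially line by line: same upload count $N_L\cdot\frac{b(a+c)}{L}\cdot p_1\cdots p_L$, same download count $ac\,p_1\cdots p_L\sum_i N_i/p_i$ (the paper writes the latter as $ac\sum_{i}N_i\prod_{j\neq i}p_j$, which is the same thing), same output size $ac\,p_1\cdots p_L$, and the rate is the quotient. The only thing you add beyond the paper is the explicit justification of $[F_i:\mathbb{F}_{q_0}]=(p_1\cdots p_L)/p_i$ via coprimality of the $p_i$, which is a fine clarification but not a different argument.
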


% \begin{proof}
% In extended version. Follows from a direct calculation.
% \end{proof}

\begin{proof}
All costs will be computed in $\mathbb{F}_{q_0}$-symbols.
The upload and download costs can be directly calculated as
\begin{align*}
&\mathcal{U} = N_L\left(\frac{ab}{L}+\frac{bc}{L}\right) \prod_{j=1}^{L} p_j, \\
&\mathcal{D} = ac\sum_{i=1}^{L} N_i \prod_{j \in [L] \setminus \{i\}} p_j
.
\end{align*}

Since the matrix $AB$ has $\mathcal{S} = ac\prod_{j=1}^{L} p_j$ symbols of $\mathbb{F}_{q_0}$, it follows that the total communication rate is given by $\frac{\mathcal{S}}{\mathcal{U}+\mathcal{D}}$, which simplify to the presented formula.
\end{proof}

\section{Example: $L=3$ and $T=2$}

We present an example of an FTP code for $L=3$ and $T=2$ and compare it to the current state of the art \cite{mital2020secure}.

\textbf{Choosing the field:} 
We begin by choosing a set of three prime numbers $\{5, 7, 11\}$, $q_0=27$ a prime power, and set $27^{385}=27^{5\cdot 7\cdot 11}$. Thus, we operate over $\mathbb{F}_{27^{385}}$. Let $\alpha_1, \alpha_2, \alpha_3 \in \mathbb{F}_{27}$ be such that $\mathbb{F}_{27}(\alpha_1)$, $\mathbb{F}_{27}(\alpha_2)$, and $\mathbb{F}_{27}(\alpha_3)$ are field extensions of $\mathbb{F}_{27}$ of degrees $5$, $7$ and $11$, respectively. 
Therefore, $\mathbb{F}_{27^{385}} = \mathbb{F}_{27}(\alpha_1, \alpha_2, \ldots, \alpha_{L})$. We also define $F_1 = \mathbb{F}_{27}(\alpha_2, \alpha_3) = \mathbb{F}_{27}^{77}$, $F_2 = \mathbb{F}_{27}(\alpha_1, \alpha_3) = \mathbb{F}_{27}^{55}$, and $F_3 = \mathbb{F}_{27}(\alpha_1, \alpha_2) = \mathbb{F}_{27}^{35}$.
\\
%Let $\mathbb{F}_{27}$ be a finite field with $27$ elements, and $\mathbb{F}_{27^{385}}=\mathbb{F}_{27}(\alpha_1, \alpha_2, \alpha_3)$, where $\alpha_1,\alpha_2,\alpha_3$ are algebraic elements of degree $5$, $7$, and $11$, respectively.\\
\textbf{Choosing the polynomials:} 
Since $L=3$, consider the setting where the user partitions the matrices $A \in \mathbb{F}_{27^{385}}^{a \times b}$ and $B \in \mathbb{F}_{27^{385}}^{b \times c}$ as $A = \begin{bmatrix}A_1 & A_2 & A_3\end{bmatrix}$ and $B^{\intercal} = \begin{bmatrix}B_1^{\intercal} & B_2^{\intercal} & B_3^{\intercal}\end{bmatrix}$ such that $AB = A_1 B_1 + A_2B_2 + A_3 B_3$, where each $A_i\in\mathbb{F}_{27^{385}}^{a \times \frac{b}{3}}$ and $B_i\in\mathbb{F}_{27^{385}}^{\frac{b}{3}\times c}$. In order to obtain $2$-security we choose $R_1, R_2\in \mathbb{F}_{27^{385}}^{a \times \frac{b}{3}}$ and $S_1, S_2 \in \mathbb{F}_{27^{385}}^{\frac{b}{3}\times c}$ all independently and uniformly at random. We then define $f,g \in \mathbb{F}_{27^{385}}[x]$ as the polynomials of degree $3$ such that, $f(\alpha_i)=A_i$, $g(\alpha_i)=B_i$, for every $i \in [3]$, and $f(\alpha_{3+j})=R_j$, $g(\alpha_{3+j})=S_j$, for $j\in [2]$. Hence, $h(x)=f(x) \cdot g(x)$ is a polynomial of degree $8$ such that $h(\alpha_i)=A_iB_i$ for every $i \in [3]$.

%\dan{remove ``chain'' and make it similar to the choice in the proof?}

\noindent \textbf{Choosing the Evaluation Points:} Let $N_1 = 13$, $N_2=15$, $N_3=19$, and $n = 22$. Consider the set $\{\alpha_1, \alpha_2, \alpha_{3}\}$ of primitive elements defined above. Let $\alpha_{4}, \ldots, \alpha_{22} \in \mathbb{F}_{27}$ be distinct elements which are not in $\{\alpha_1, \alpha_2, \alpha_{3}\}$. Define $\Omega=\{\alpha_1, \ldots, \alpha_{22}\}$. The evaluation points the user sends are those in the set $\{\alpha_{4}, \ldots, \alpha_{22}\}$ of size $19$.

\noindent \textbf{Upload Phase:} The FTP code uses $19$ servers. The user uploads $f(\alpha_{3+i})$ and $g(\alpha_{3+i})$ to each Server $i$.

\noindent \textbf{Download Phase:} Let $v_j = \prod_{\substack{1\leq i\leq 22 \\ i\neq j}}   (\alpha_j-\alpha_i)^{-1}$. Let $U_1 = \{4, 5, \ldots, 16\}, U_2 = \{4, 5, \ldots, 18\}$ and $U_3=\{4,\ldots, 22\}$ be the index sets of cardinality $13$, $15$, and $19$, respectively.
For $i \in [3]$, define $k_i(x)$ as the annihilator polynomial for $\{\alpha_j: j\in [22]\setminus ([4:N_i+3] \cup\{i\})\}$, i.e., $k_i(x) = \prod_{j\in[22]\setminus (U_i\cup\{i\})}(x-\alpha_{j})$.

Defining $V$ as in Definition \ref{def:GRS} for the dual code, the Server $j$, for each $i\in [3]$, computes $\tr_i(v_{3+j}k_i(\alpha_{3+j}) h(\alpha_{3+j}))$ and sends these $F_i$-values to the user.

\noindent \textbf{User Decoding:} In Lemma~\ref{lem:decodability}, we show that the user is able to retrieve $h(\alpha_i) = A_iB_i$ from 
$\{\tr_i(v_jk_i(\alpha_{j}) h(\alpha_j)): j\in [4:N_i+3]\}$. Because $V$ is constructed as in Definition \ref{def:GRS} for the dual code, it follows that
\begin{align*}
\tr_i(v_1 k_i(\alpha_i)\alpha_i^s h(\alpha_i))
% &= -\sum_{j=2}^{22} \tr_i(v_jk_i(\alpha_{j})\alpha_{j}^s h(\alpha_j))\\
% &= -\sum_{j\in U_i} \tr_i(v_jk_i(\alpha_{j})\alpha_{j}^s h(\alpha_j))\\
&= -\sum_{j\in U_i} \alpha_{j}^s \tr_i(v_jk_i(\alpha_{j}) h(\alpha_j))
,
\end{align*}
where the trace dual is $\tr_i = \tr_{\mathbb{F}_{27^{385}}/\mathbb{F}_{27}(\alpha_j : j \in [3] \setminus \{i\})}$
%\dan{where the trace is $\tr_i = \tr_{\mathbb{F}_{27^{385}}/\mathbb{F}_{27}(\alpha_j : j \in [3] \setminus \{i\})}$ OR where the trace dual is ... wrt. $\tr_i = \tr_{\mathbb{F}_{27^{385}}/\mathbb{F}_{27}(\alpha_j : j \in [3] \setminus \{i\})}$ Or similar}. 
Hence, given $\{\tr_i(v_jk_i(\alpha_{j}) h(\alpha_j)) : j\in U_i\}$, the user can compute $\tr_i(v_1 k_i(\alpha_i)\alpha_i^s h(\alpha_i))$. Using trace-dual basis elements $\mu_{i,s}$, for $i\in [3]$, we have
$h(\alpha_i) = \sum_{s=0}^{N_i-1} \tr_i(v_i k(\alpha_i)\alpha_i^s h(\alpha_i))\mu_{i,s}.$
%\begin{align*}
%h(\alpha_i) = \sum_{s=0}^{N_i-1} \tr_i(v_i k(\alpha_i)\alpha_i^s h(\alpha_i))\mu_{i,s}\\
%h(\alpha_2) = \sum_{s=0}^6 \tr_2(v_1 k(\alpha_2)\alpha_2^s h(\alpha_2)) \mu_{2,s},\\
%h(\alpha_3) = \sum_{s=0}^{10} \tr_3(v_1 k(\alpha_3)\alpha_3^s h(\alpha_3)) \mu_{3,s}.
%\end{align*}
Summing up the results, the user can then decode the product $AB = h(\alpha_1) + h(\alpha_2) + h(\alpha_3) = A_1B_1+ A_2B_2 + A_3B_3$.

%\dan{Reformulate the next sentences depending on the notation in the proof.}\\
\textbf{Communication Costs:} Server $i$ has $h(\alpha_i)$ as in terms of symbols of $\mathbb{F}_{27^{385}}$, but transmits only $\tr_i(v_jk_i(\alpha_{j}) h(\alpha_j))$, symbols of $F_i$ which is one symbol of $F_i$. To retrieve $h(\alpha_i)$, only a $p_i$-th part of the information in each of $|U_i|$ servers is needed. 
Hence, to determine $h(\alpha_i)$, it is enough to download an $x$-th part of each of the $y$ servers in $U_i$ for $(i,x,y) \in \{(1,5,13),(2,7,15),(3,11,19)\}$.

% Hence, to determine $h(\alpha_1)$, it is enough to download $5$-th part of each of $13$ servers in $U_1$; to determine $h(\alpha_2)$, it is enough to download $7$-th part of each of the $15$ servers in $U_2$; and to determine $h(\alpha_3)$, it is enough to download $11$-th part of the $19$ servers in $U_3$.

In conclusion, the FTP code used to compute the product $AB$, with partitioning parameters $L=3$ and security parameter $T=2$, uses $N_3=19$ servers and achieves a download rate of $\frac{1}{6.47}$. In comparison, the state of the art traditional polynomial code in \cite{mital2020secure} has a download rate of $\frac{1}{7}$. The scheme in \cite{mital2020secure} has a total communication rate equal to $\left(7\left(\frac{ab+bc}{3ac}+1\right)\right)^{-1}$. The total communication rate of the above FTP code is $\left(19\left(\frac{bc+ab}{3ac}+\frac{2491}{385}\right)\right)^{-1}$. A direct calculation shows that the FTP code outperforms the scheme in \cite{mital2020secure} for matrices with dimensions such that $b\left(\frac{1}{a}+\frac{1}{c}\right)< \frac{306}{2695}$.
\section{Proof of Theorem~\ref{theo:comparison}}

We begin by proving a technical lemma.

\begin{lemma}\label{lem:helper}
%Let $\lambda,T,N,N',L,L'$ be positive integers, $\eta \ge 0$ be a real number,
Let $T,N,N',L,L'$ be positive integers, $\lambda,\eta \ge 0$ be real numbers,
$p_i \ge 2L(L+T-1)\eta$ be primes, for every $i \in [L]$, such that $p_i < p_{i+1}$ and $p_L > L N' / L' -2L-2T+2$, $N_i = p_i + 2L + 2T - 2$, and $\lambda < \frac{N'-L-1/\eta}{N_L/L - N'/L'}$. Then, 
\begin{align} \label{eq: helper lemma}
    \left( \frac{N_L \lambda}{L}+\sum_{i=1}^L\frac{N_i}{p_i}\right)^{-1} >
\left(N'\left(\frac{\lambda}{L'}+1\right)\right)^{-1}.
\end{align}
\end{lemma}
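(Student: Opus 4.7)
Both sides of \eqref{eq: helper lemma} are positive, so I would first invert both sides and rearrange, reducing the desired inequality to
\[
\lambda\left(\frac{N_L}{L}-\frac{N'}{L'}\right) < N' - \sum_{i=1}^{L}\frac{N_i}{p_i}. \qquad (\ast)
\]
The first thing to check is that the coefficient of $\lambda$ on the left-hand side is strictly positive: plugging in $N_L=p_L+2L+2T-2$, this is exactly the hypothesis $p_L > LN'/L' - 2L - 2T + 2$. Positivity of this coefficient is essential because the assumed upper bound on $\lambda$ is only meaningful in that case, and it will allow us to multiply through at the end without flipping the inequality.

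Next I would bound the sum on the right-hand side of $(\ast)$ from above. Writing $N_i/p_i = 1 + (2L+2T-2)/p_i$, I get
\[
\sum_{i=1}^{L}\frac{N_i}{p_i} \;=\; L \;+\; 2(L+T-1)\sum_{i=1}^{L}\frac{1}{p_i}.
\]
The hypothesis $p_i \ge 2L(L+T-1)\eta$ (assuming $\eta>0$; when $\eta=0$ the bound $\lambda<(N'-L-1/\eta)/(\cdots)$ forces the statement to be vacuous) gives $\sum_{i=1}^{L} 1/p_i \le L/(2L(L+T-1)\eta) = 1/(2(L+T-1)\eta)$, so
\[
\sum_{i=1}^{L}\frac{N_i}{p_i} \;\le\; L + \frac{1}{\eta},
\qquad \text{hence} \qquad N' - \sum_{i=1}^{L}\frac{N_i}{p_i} \;\ge\; N' - L - \frac{1}{\eta}.
\]

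To finish, I would use the assumed bound $\lambda < (N'-L-1/\eta)/(N_L/L - N'/L')$ together with the positivity of $N_L/L - N'/L'$ already verified to write
\[
\lambda\left(\frac{N_L}{L}-\frac{N'}{L'}\right) \;<\; N' - L - \frac{1}{\eta} \;\le\; N' - \sum_{i=1}^{L}\frac{N_i}{p_i},
\]
which is precisely $(\ast)$. I do not anticipate any real obstacle beyond bookkeeping: the only subtle point is confirming that the hypothesis $p_L > LN'/L'-2L-2T+2$ is exactly what is needed to make $(\ast)$ a genuine upper bound on $\lambda$ rather than a lower bound, and that $p_i \ge 2L(L+T-1)\eta$ is tuned exactly to produce the $1/\eta$ slack absorbed by the numerator $N'-L-1/\eta$ of the hypothesis.
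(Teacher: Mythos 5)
Your proof is correct and takes essentially the same approach as the paper: verify positivity of $N_L/L - N'/L'$ from the hypothesis on $p_L$, bound $\sum_i N_i/p_i \le L + 1/\eta$ using the lower bound on the $p_i$, and chain with the hypothesis on $\lambda$. The only addition is your (correct) remark about the vacuity of the $\eta = 0$ case, which the paper leaves implicit.
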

% \begin{proof}
% In extended version. Follows from a direct calculation.
% \end{proof}

\begin{proof}
Note that $p_L > L N' / L' -2L-2T+2$ is equivalent to $N_L/L - N'/L' > 0$.
Next, $p_i \ge 2L(L+T-1)\eta$ implies $\sum_{i=1}^{L} \frac{2L+2T-2}{p_i} \le \frac{1}{\eta}$.
Then, the statement implies
\begin{align*}
\left( \frac{N_L}{L} - \frac{N'}{L'} \right) \lambda
&<
N'-L-  \frac{1}{\eta}
\\&
\le
N'-L-\sum_{i=1}^{L} \frac{2L+2T-2}{p_i}
\\&
=
N'-\sum_{i=1}^{L} \frac{N_i}{p_i}.
\end{align*}
This is equivalent to $ \frac{N_L \lambda}{L} + \sum_{i=1}^{L} \frac{N_i}{p_i} < \frac{N' \lambda}{L'} + N'$, which in turn proves the claim.
\end{proof}

Consider a traditional polynomial scheme with partitioning parameter $L$ and security parameter $T$. Given the trivial lower bound $N'>L$ on its recovery threshold $N'$, it follows from a direct calculation that the total communication rate $\mathcal{R}'$ of the traditional polynomial scheme is upper bounded by 
\begin{align} \label{eq: traditional poly}
    \mathcal{R}' < \left(N'\left(\frac{b}{L}\left( \frac{1}{a}+\frac{1}{c} \right)+1\right) \right)^{-1}.
\end{align}

Let $\lambda = b \left( \frac{1}{a}+\frac{1}{c} \right)$, $\mu > (N'-L)^{-1}$ and select $L$ primes $p_1,\ldots,p_L$ in increasing order and such that $p_i \ge 2L(L+T-1)\eta$ and $p_L > L N' / L -2L-2T+2$. Next, define $N_i = p_i + 2L + 2T - 2$  and set
\begin{align} \label{eq: K}
    K=\frac{N'-L-1/\eta}{N_L/L - N'/L}.
\end{align}
Then, by Theorem~\ref{theo:scheme}, the communication rate $\mathcal{R}$ of the FTP code with these parameters, given by \eqref{eq: FTP rate}, is equal to the left hand side of \eqref{eq: helper lemma}. For these same parameters, the right hand sides of \eqref{eq: helper lemma} and \eqref{eq: traditional poly} are the same. Thus, if $b \left( \frac{1}{a}+\frac{1}{c} \right) < K$, it follows from Lemma \ref{lem:helper} that the communication rate of the FTP code $\mathcal{R}$ is larger than the right hand side of \eqref{eq: helper lemma}, and is therefore larger than the communication rate of the traditional polynomial scheme $\mathcal{R}'$.

\section*{Acknowledgment}
Rafael D'Oliveira was supported by MIT Portugal Program (Project SNOB5G with Nr. 045929 [CENTRO-01-0247-FEDER-045929]), MIT Lincoln Laboratory Purchase Order 7000500173, and the FinTech@CSAIL Research Initiative. Salim El Rouayheb was partially supported  by the NSF under grant CNS-1801630. Daniel Heinlein is supported by the Academy of Finland, Grant~331044.
%The fourth author is supported by the Academy of Finland, Grant~331044.

% The preferred spelling of the word ``acknowledgment'' in America is without 
% an ``e'' after the ``g''. Avoid the stilted expression ``one of us (R. B. 
% G.) thanks $\ldots$''. Instead, try ``R. B. G. thanks$\ldots$''. Put sponsor 
% acknowledgments in the unnumbered footnote on the first page.

\bibliographystyle{IEEEtran}
\bibliography{references.bib}

\end{document}